
\documentclass[twoside,leqno,11pt]{article}

\usepackage[letterpaper]{geometry}

\usepackage{graphicx} 
\usepackage{ltexpprt}
\usepackage{booktabs}
\usepackage{amsfonts}
\usepackage{amssymb}

\usepackage{amstext}
\usepackage{amsmath}
\usepackage{xspace}
\usepackage{theorem}
\usepackage{graphicx}
\usepackage{mdframed}
\usepackage{comment}
\usepackage{mathdots}
\usepackage{xcolor}
\usepackage{graphics}
\usepackage{colordvi}
\usepackage{color}

\usepackage{titlesec}

\titleformat{\section}
  {\normalfont\Large\bfseries}  
  {\thesection}{1em}{}          

\titleformat{\subsection}
  {\normalfont\large\bfseries}
  {\thesubsection}{1em}{}

\usepackage{paralist}
\usepackage{bm}
\usepackage{mathpazo}

\usepackage{float}
\usepackage{kbordermatrix}
\usepackage{thmtools, thm-restate}

\usepackage{tikz}
\usetikzlibrary{snakes}
\usepackage[boxruled,lined,linesnumbered]{algorithm2e}
\usepackage{todonotes}
\usepackage{subcaption}
\usepackage{enumitem}
\usetikzlibrary{calc}
\usepackage{lineno}
\usepackage[colorlinks=true, linkcolor=black, citecolor=blue, urlcolor=blue]{hyperref}
\usepackage[capitalise,nameinlink]{cleveref}
\crefname{figure}{Figure}{Figures}

\newtheorem{theorem}{Theorem}
\newtheorem{conjecture}{Conjecture}

\newtheorem{lemma}{Lemma}

\newtheorem{question}{Question}
\newtheorem{corollary}{Corollary}

\newtheorem{definition}{Definition}

\newtheorem{remark}{Remark}

\newenvironment{proof}[1]{\noindent{\bf Proof }#1}%
        {\hspace*{\fill}$\Box$\par\vspace{4mm}}

\usepackage{authblk}

\newcommand{\GR}{\textsc{Greedy}\xspace}

\newcommand{\aset}{{\mathcal A}}

\begin{document}

\title{$\GR$ BST on Permutation Initial Tree}

\author{
    Akash Pareek\\
    Department of Computer Science and Automation,\\
    IISc Bangalore\\
    
    akashpareek@iisc.ac.in
}

\date{}

\maketitle







\begin{abstract}

The $\GR$ binary search tree (BST) algorithm, like the Splay tree, is a prominent candidate for the \emph{dynamic optimality conjecture}. While $\GR$ satisfies many desirable properties of BST, its cost and analysis to execute a search sequence $S$ is known to depend heavily on the choice of the \emph{initial tree} configuration. Most prior analyses assume a flat (empty) initial tree, under which several tight bounds are established. 

In this work, we introduce the notion of a \emph{permutation initial tree}, a specific class of non-flat initial tree and prove that for any permutation search sequence $S=(s_1,s_2,\dots, s_n)$, there exists a permutation initial tree $I_p$ such that the cost of $\GR$ on $I_p$ is same as its cost on the flat initial tree.

As an application of our result, we show that the \emph{preorder traversal conjecture} holds for $\GR$ when the initial tree is a permutation initial tree. While it was previously known that $\GR$ achieves an $O(n)$ cost on preorder sequences for flat initial tree (Chalermsook et al., FOCS 2015), our result demonstrates that the same linear bound holds when the initial tree is a permutation initial tree. This result also matches the $O(n)$ bound for Splay tree on preorder sequence when the initial tree aligns with the traversal order (Chaudhuri and Höft, SIGACT 1993).

\end{abstract}
\newpage
\section{Introduction}\label{sec:intro}

Define the set $\{1,2,\dots,n\}$ as $[n]$, where each element in $[n]$ is referred to as a \emph{key}. Searching for a key is a fundamental operation in data structures, and one data structure that is widely used for this purpose is a binary search tree (BST). In a BST, each node contains a key and has pointers to its left and right children, as well as to its parent (except the root node, which has no parent pointer). Each node in a BST can have up to two children. Assuming all nodes have distinct keys, a BST has the following property: the keys in the left subtree are always less than the key of its parent node, and the keys in the right subtree are always greater than the key of its parent node.

Balanced BSTs such as AVL tree \cite{adel1962algorithm}, Red-Black tree \cite{guibas1978dichromatic} have worst-case time complexity of  $O(\log n)$ per search. Sleator and Tarjan introduced an online self-adjusting binary search tree named the \emph{Splay tree} \cite{SleatorT85}, which has an \emph{amortized cost} of $O(\log n)$ per search. In the same paper, the authors introduced the famous \emph{dynamic optimality conjecture}. Let $S=(s_1,s_2,\dots,s_m)$ be a search sequence. Let $OPT(S)$ denote the optimal offline cost to serve the sequence $S$ and let $Cost_{\aset}$(S) denote the cost of an online algorithm $\aset$ to serve the sequence $S$. The dynamic optimality conjecture states the following:

\begin{conjecture}(Dynamic Optimality Conjecture \cite{SleatorT85})\label{con:dynamic} For all search sequence $S$, $Cost_{\textsc{Splay tree}}(S)=O(OPT(S))$.
\end{conjecture}

The  dynamic optimality conjecture, originally defined for the Splay tree, can be extended to any online BST $\aset$ as follows:

\begin{definition} (Dynamically Optimal BST)
    A BST $\aset$ is said to be dynamically optimal if, for all search sequences $S$, $Cost_{\aset}(S)=O(OPT(S)).$
\end{definition}

 Apart from the Splay tree \cite{SleatorT85}, $\textsc{Greedy}$ BST
  (\cite{demaine_geometry,lucas,munro2000competitiveness}) is also conjectured to be dynamically optimal. The reason why Splay tree and $\textsc{Greedy}$ are considered as the candidates for dynamic optimality conjecture is that there are many properties of BST that are known to be satisfied by both the Splay tree and $\textsc{Greedy}$, such as \emph{balance theorem, static finger property, static optimality, working set theorem} \cite{SleatorT85,fox11}, \emph{dynamic finger} \cite{cole2000dynamic,IaconoL16}.

Despite these shared properties, one aspect that significantly affects the analysis of both the Splay tree and $\textsc{Greedy}$ is the choice of the \emph{initial tree}. An initial tree is a tree configuration present before executing a search sequence $S$ (see \cref{sec:initialtree} for details). If no such tree is specified, we assume an empty or \emph{flat initial tree}, denoted by $I_f$. Analyzing BST algorithms with a flat initial tree is often simpler, as the cost depends solely on the sequence $S$ and the BST algorithm. However, when the initial tree is not flat, both the execution cost and the analysis can change significantly. It is well known that the cost of $\textsc{Greedy}$ and Splay tree depends on the choice of the initial tree, with different initial trees leading to different costs for the same sequence $S$.
Hence, in this paper, we ask the following question for $\GR$.

\begin{question}
Does there exist an initial tree for which the cost of $\GR$ on a sequence $S$ matches its cost starting from the flat initial tree?
\end{question}

 We answer this question affirmatively. Specifically, for any permutation search sequence $S$, we show that there exists a corresponding \emph{permutation initial tree}, denoted by $I_p$, such that the cost of $\GR$ to execute the sequence $S$ on $I_f$ and $I_p$ is the same.


 Next, we use the above result to prove the \emph{preorder (traversal) conjecture} for a specific initial tree. To this end, we first describe the notion of \emph{pattern avoidance}, which will help us to define the \emph{preorder sequence} and the preorder (traversal) conjecture.
 
\vspace{0.4cm}

\textbf{Pattern avoidance:} For simplicity, we discuss only the permutation search sequence. Consider a permutation $ P = (a_1, a_2, \dots, a_n)$ and a permutation pattern  $P' = (b_1, b_2, \dots, b_l)$. We say that $P$ contains $P'$ if there exist indices  $i_1 < i_2 < \dots < i_l$ such that $a_{i_j} < a_{i_k}$ if and only if $b_j < b_k$ for all $j, k \in [l]$. In simpler terms, this means that $P$ possesses a subsequence that follows the same relative order as $P'$. In contrast, we say $P$ \emph{avoids} $P'$ if there does not exist a subsequence in $P$ that matches the relative order of $P'$. For example, $P=(50,40,30,20,10)$ does not contain the pattern $P'=(1,2)$, as no subsequence in $P$ is order isomorphic to $P'$.

We are now ready to define the preorder sequence and the preorder (traversal) conjecture.

\begin{definition} (Preorder Sequence) A sequence $S=(s_1,s_2,\dots,s_n)$ is called a preorder sequence if $S$ is obtained by a preorder traversal of a binary search tree $\aset'$ on $[n]$. Alternatively, $S$ is a preorder sequence if $S$ avoids the pattern $(2,3,1)$.
\end{definition}

\begin{conjecture}((Preorder) Traversal Conjecture)\label{con:preorder} Let $S$ be a preorder sequence. A BST algorithm $\aset$ is said to satisfy the traversal conjecture if starting with an arbitrary initial tree $I$, $Cost_{\aset}(S)=O(n)$.
\end{conjecture}


For Splay trees, in \cite{chaudhuri1993splaying}, the authors showed that when the sequence $S$ has the same relative order as the initial tree, then the Splay tree satisfies the traversal conjecture. Levy and Tarjan \cite{levy2019splaying} showed that inserting a preorder sequence $S$ into an empty tree via splaying takes $O(n)$ time. For an arbitrary initial tree, it is not known whether the preorder sequence is $o(n\log n)$.

For $\textsc{Greedy}$ on preorder sequences, the bounds are much better than the Splay tree. For preorder sequences with a flat initial tree, $\textsc{Greedy}$ is known to be $O(n)$ \cite{chalermsook2015pattern}. For arbitrary initial tree, in \cite{chalermsook2015pattern}, the authors gave a bound of $n2^{\alpha(n)^{O(1)}}$\footnote{$\alpha(n)$ is the inverse Ackermann function.} for any preorder sequence. This was recently improved to $O(n2^{\alpha(n)})$ \cite{chalermsook2023improved}.

\subsection{Our Results}

We begin by investigating the cost of $\textsc{Greedy}$ on the permutation initial tree, which matches the cost of $\textsc{Greedy}$ when the initial tree is flat. We show the following main theorem:

\begin{theorem}\label{lem:informalsamecost}(Informal)
    Let $S$ be a permutation search sequence, and let $I_p$ be a permutation initial tree, then $\textsc{Greedy}_{I_p}(S)=\textsc{Greedy}_{I_f}(S)$.
\end{theorem}

In the above theorem we show that for every permutation search sequence \( S \), there exists an initial tree, different from the flat initial tree, such that the cost of $\textsc{Greedy}$ on \( S \), denoted $\textsc{Greedy}(S)$, is same for both the initial trees. This is the first result to show that $\textsc{Greedy}(S)$ is the same for two different initial trees. 

Next, to demonstrate the application of  \cref{lem:informalsamecost}, we apply it to the preorder sequence, leveraging the known bound of preorder sequences when the initial tree is flat. By applying \cref{lem:informalsamecost} and the fact that preorder sequences have an \( O(n) \) cost on the flat initial tree for $\textsc{Greedy}$ \cite{chalermsook2015pattern}, we obtain the following theorem.

\begin{theorem}\label{thm:informalpreorder}(Informal)
  Let $S$ be a preorder sequence, then $\textsc{Greedy}_{I_p}(S)=O(n)$.
\end{theorem}

For the above theorem, we show that for every preorder sequence \( S \), there exists an initial tree, distinct from the flat initial tree, such that $\textsc{Greedy}(S) = O(n)$. This is the first result to show that $\textsc{Greedy}$ has a linear cost on preorder sequences when the initial tree is not flat. Notably, this matches the result of \cite{chaudhuri1993splaying} for Splay trees, thereby extending the parallel between the two algorithms which are candidates for dynamic optimality conjecture.

\subsection{Organisation}

    Section \ref{sec:prelims} covers the preliminaries, while Section \ref{sec:initialtree} discusses the concept of the initial tree. In Section \ref{sec:GRpermu}, we analyze $\textsc{Greedy}$ on a permutation initial tree and prove \cref{lem:informalsamecost}. Section \ref{sec:preorder} discusses the cost of preorder sequences on permutation initial tree and proves \cref{thm:informalpreorder}. Finally, Section \ref{sec:conclusion} concludes the paper and presents a few open problems.
 
\section{Preliminaries}\label{sec:prelims}

Consider a permutation search sequence $S = (s_1, s_2, \dots, s_n)$ where each $s_i$ is a key. This sequence can be visually represented as points in a plane. In this representation, the point $(s_i, i)$ denotes the search key $s_i$ at time $i$. Let $P$ denote the set of points corresponding to the sequence $S$. To simplify notation, we refer to both the search sequence and the set of points in the plane as $S$. For any point $p$ in the plane, $p.x$ represents its $x$-coordinate, which signifies the key, while $p.y$ represents its $y$-coordinate, indicating the time.

Consider any two distinct points $p$ and $q$ in the plane. If $p$ and $q$ do not lie on the same vertical or horizontal line, they define a rectangle denoted as $\square_{pq}$ if $p.x < q.x$, and otherwise denoted as $\square_{qp}$.

\begin{figure}
\begin{minipage}[b]{0.47\textwidth}
\centering
\begin{tikzpicture}[scale=0.47]
\draw[->] (0,-2) -- (10,-2);
\draw[->] (0,-2) -- (0,7);
\filldraw[gray!40!white,opacity=.5,draw=gray] (4,1) rectangle (6,3);
\filldraw[draw=black,fill=red!70!white] (3.9,0.9) rectangle (4.1,1.1) (5.9,2.9) rectangle (6.1,3.1);
\draw(4,0.75)node[anchor=north]{$p$} (6,3.25)node[anchor=south]{$q$};
\end{tikzpicture}
\end{minipage}
\hspace{.1cm}
\begin{minipage}[b]{0.427\textwidth}
\centering
\begin{tikzpicture}[scale=0.47]
\draw[->] (0,-2) -- (10,-2);
\draw[->] (0,-2) -- (0,7);
\filldraw[gray!40!white,opacity=.5,draw=gray] (4,1) rectangle (6,3);
\filldraw[draw=black,fill=red!70!white] (3.9,0.9) rectangle (4.1,1.1) (5.9,2.9) rectangle (6.1,3.1);
\filldraw[blue] (4,3) circle (3.5pt);
\draw(4,0.75)node[anchor=north]{$p$} (6,3.25)node[anchor=south]{$q$} (4,3.25)node[anchor=south]{$r$};
\end{tikzpicture}
\end{minipage}
\caption{(a) Arborally unsatisfied rectangle $\square_{pq}$. (b) Arborally satisfied rectangle $\square_{pq}$ due to point $r$.}
\label{fig:arborally}
\end{figure}
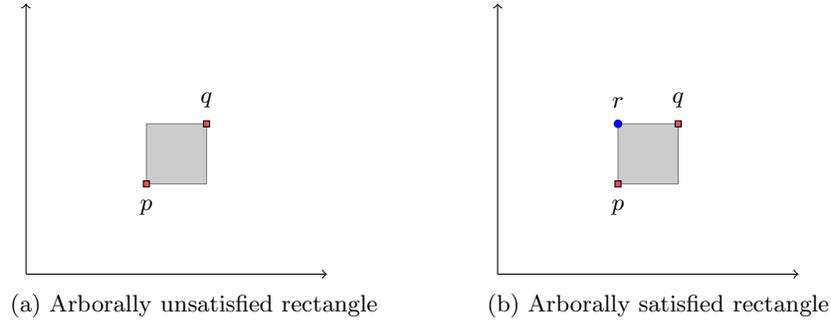

According to \cite{demaine_geometry}, a rectangle $\square_{pq}$ is termed \emph{arborally satisfied} in a point set $X$ if there exists another point $r \in X \setminus \{p, q\}$ such that $r$ lies in $\square_{pq}$ (See Figure 1(b)). Otherwise, the rectangle $\square_{pq}$ is said to be \emph{arborally unsatisfied} in $X$ (See Figure 1(a)). We now define an arborally satisfied set as follows:

\begin{definition}(Arborally satisfied set) A point set $X$ is said to be arborally satisfied if there exists another set $Z$ minimizing $|X \cup Z|$, where every pair of points in $X \cup Z$ is arborally satisfied.
\end{definition}

\cite{demaine_geometry} showed that \emph{finding the best BST execution for a sequence $S$ is equivalent to finding the minimum cardinality set $Z$ such that $S \cup Z$ is arborally satisfied}. In the online setting, where the point set \( S \) is revealed incrementally, an online arborally satisfied set can be constructed by adding the minimum number of points necessary to satisfy the arboral condition each time a new point is added to \( S \). To this end, we introduce the following Lemma from \cite{demaine_geometry}.

\begin{lemma}
\label{lem:bsttogeometry} (See Lemma 2.3 in \cite{demaine_geometry})
Let $A$ be an online algorithm that outputs an arborally satisfied set on any input representing $S$.  Then, there is an online BST algorithm $A'$ such that the cost of $A'$ is asymptotically equal to the cost of $A$, where the cost of $A$ is the number of points added by $A$ plus the size of $S$.
\end{lemma}

Demaine et al. \cite{demaine_geometry} gave an online BST algorithm named $\GR$ to find the arborally satisfied set for the search sequence $S$. Let us now describe the $\GR$ algorithm.

\vspace{0.4cm} 

\paragraph*{\textbf{$\textsc{Greedy}$ algorithm:}} For an online search sequence $S$, conduct a horizontal line sweep at each time $i$. At time $i$, identify all rectangles with one endpoint at $s_i$ and the other endpoint at $z$, where $z$ is positioned below the sweep line. If the rectangle $\square_{s_iz}$ is not arborally satisfied, introduce a point at the corner of the rectangle on the sweep line to make the rectangle $\square_{s_iz}$ arborally satisfied at time $i$. See  \cref{fig:greedy} for the execution of $\textsc{Greedy}$.

\begin{figure}[hpt!]
\centering
\begin{tikzpicture}[scale=.70]

\draw[->][black] (0,0) -- (10,0);
\draw[->][black] (0,0) -- (0,8);
\filldraw[gray!40!white,opacity=.5,draw=gray] (4,1) rectangle (6,3);
\filldraw[red!40!white,opacity=.5,draw=red] (5,5) rectangle (4,3);
\filldraw[green!40!white,opacity=.5,draw=green] (5,5) rectangle (6,3);
\filldraw[blue!40!white,opacity=.5,draw=blue] (8,7) rectangle (6,5);
\filldraw[draw=black,fill=red!70!white] (3.9,0.9) rectangle (4.1,1.1) (5.9,2.9) rectangle (6.1,3.1) (4.9,4.9) rectangle (5.1,5.1) (7.9,6.9) rectangle (8.1,7.1)   ;

\filldraw[blue] (4,3) circle (3.5pt) (4,5) circle (3.5pt) (6,5) circle (3.5pt) (6,7) circle (3.5pt);

\draw[black](11,0.5)node[anchor=north]{Key} (0,9.2)node[anchor=north]{Time};
\end{tikzpicture}

\caption{Execution of $\textsc{Greedy}$. Red points denote the searched keys. Blue points denote the keys added (touched) by $\textsc{Greedy}$ at any time $i$ to make all the rectangles arborally satisfied at time $i$.}
    \label{fig:greedy}
\end{figure}
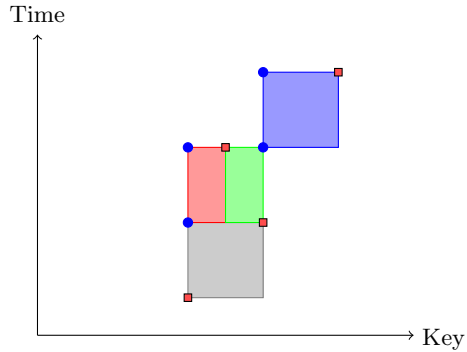

\begin{definition}(Touched key) A key $x$ is said to be touched at time $i$ if a point $p = (x,i)$  was added by $\textsc{Greedy}$  at time $i$. We also say that $\textsc{Greedy}$ added the point $p$ at time $i$.
\end{definition}

 The cost of the $\textsc{Greedy}$ algorithm for a permutation sequence $S$ is $n$ plus the number of touch points added by $\textsc{Greedy}$ for the execution of the sequence $S$. 

 Let $S$ be the set of points in the $x,y$ plane. We now define the mirror of a point set $S$, denoted as $\mathcal{M}(S)$ as follows:

\begin{definition}(Mirror point set)\label{def:mirror} The mirror point set of $S$, denoted as $\mathcal{M}(S)$, is the mirror image of the points in $S$ across the $x$-axis, i.e., a point $(s_i,i)\in S$ becomes $(s_i,-i)$ in $\mathcal{M}(S)$.
\end{definition}

For brevity, we also denote the mirror point set as a search sequence. Let $(s,t)\in \mathcal{M}(S)$, then we say that the key $s$ is searched at time $t$. With all the basic terminologies in hand, we now discuss the initial tree for BST.

\section{Initial tree}\label{sec:initialtree}


In the context of BSTs, an \emph{initial tree} refers to the tree structure existing before executing a sequence. For $\textsc{Greedy}$, an initial tree corresponds to the set of points on the $x,y$ plane before executing a sequence (See \cref{fig:initialvsflat}). 

In the existing literature, typically, for $\textsc{Greedy}$, the initial tree is either \emph{flat} (equivalent to having no points in the plane prior to execution), which we denote as $I_f$ (See Figure 3(b)), or it can be arbitrary (See Figure 3(a)). For a search sequence $S$, the execution of $\textsc{Greedy}$ can differ significantly depending on whether it starts with an arbitrary initial tree or a flat initial tree (See \cref{fig:initialvsflat}). As a result, the cost of $\textsc{Greedy}$ varies for the same search sequence $S$ across different initial trees.  To overcome this problem, researchers sometimes use a pre-processing step to convert an arbitrary initial tree to $I_f$ (See Section 2 in \cite{chalermsook2015pattern}). After obtaining the flat initial tree, the execution of the search sequence $S$ begins. However, in some cases, the Splay tree and $\textsc{Greedy}$ are also studied on specific types of initial trees \cite{chaudhuri1993splaying,chalermsook2015pattern}. But still, there has been limited research on executing a sequence $S$ with a specific initial tree. To address this, we introduce an initial tree for $\textsc{Greedy}$ called the \emph{Permutation initial tree}.

\begin{figure}[hpt!]

\begin{subfigure}{0.47\textwidth}
\hspace{-.2cm}
  \begin{tikzpicture}[scale=.47]
\draw [gray,dashed] (-1,0)--(11,0) ;
\draw[->][black] (0,-5.5) -- (10,-5.5);
\draw[->][black] (0,-5.5) -- (0,6);

\filldraw[draw=black,fill=red!70!white] (4.9,0.9) rectangle (5.1,1.1) (2.9,2.9) rectangle (3.1,3.1) (6.9,4.9) rectangle (7.1,5.1) ;

\filldraw[blue] (4,1) circle (3.5pt) (6,1) circle (3.5pt) (2,3) circle (3.5pt) (4,3) circle (3.5pt) (6,5) circle (3.5pt) (4,5) circle (3.5pt) (8,5) circle (3.5pt);

\filldraw[draw=blue, fill=white] (4,-1) circle (3.5pt) (6,-1) circle (3.5pt) (2,-2) circle (3.5pt) (8,-3) circle (3.5pt) (3,-4) circle (3.5pt) (7,-4) circle (3.5pt) (5,-4) circle (3.5pt);

\draw[black](12,0.4)node[anchor=north]{$t=0$};
\end{tikzpicture}
\captionsetup{justification=centering}
    \caption{Arbitrary initial tree}
    \label{fig:initialtree}
\end{subfigure}
\begin{subfigure}{0.47\textwidth}
\hspace{0.2cm}
   \begin{tikzpicture}[scale=.47]
\draw [gray,dashed] (-1,0)--(11,0) ;
\draw[->][black] (0,-5.5) -- (10,-5.5);
\draw[->][black] (0,-5.5) -- (0,6);

\filldraw[draw=black,fill=red!70!white] (4.9,0.9) rectangle (5.1,1.1) (2.9,2.9) rectangle (3.1,3.1) (6.9,4.9) rectangle (7.1,5.1) ;

\filldraw[blue] (5,3) circle (3.5pt) (5,5) circle (3.5pt);

\filldraw[draw=blue, fill=white] (5,-1) circle (3.5pt) (4,-1) circle (3.5pt) (6,-1) circle (3.5pt) (2,-1) circle (3.5pt) (8,-1) circle (3.5pt) (3,-1) circle (3.5pt) (7,-1) circle (3.5pt);

\draw[black](12,0.4)node[anchor=north]{$t=0$};
\end{tikzpicture}
\captionsetup{justification=centering}
\caption{Flat initial tree}
    \label{fig:flattree}
    
\end{subfigure}
\caption{Execution of $\GR$ on two different initial trees. Hollow points below $t=0$ represent the initial set of points (initial tree). $(a)$ depicts an arbitrary initial tree. $(b)$ depicts a flat (empty) initial tree.}
\label{fig:initialvsflat}
\end{figure}

\begin{definition}\label{def:initialtree}(Permutation Initial Tree) An initial tree denoted as $I_p$ is called a permutation initial tree if the set of points in the plane before the execution of $\textsc{Greedy}$ on a sequence is a permutation, i.e., there are no two points $p_1$ and $p_2$ in the initial tree such that $p_1.x=p_2.x$ and $p_1.y=p_2.y$.
\end{definition}

Let $I_p = (z_1, z_2, \dots, z_n)$ be a permutation initial tree (See  \cref{fig:permutationinitial}), where each $z_i$ represents a point parallel to the $x$-axis at time $-i$. In other words, $z_i$ is a point in the initial tree corresponding to a key with negative time, indicating that $z_i$ exists prior to any search.

\begin{remark}
    We can also view the permutation initial tree $I_p$ as a permutation sequence $S=(z_1, z_2, \dots, z_n)$.
\end{remark}

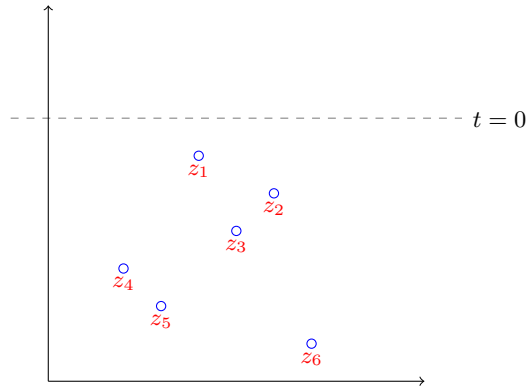
\begin{figure}
\centering
\begin{tikzpicture}[scale=.60]
\draw [gray,dashed] (-1,0)--(11,0) ;
\draw[->][black] (0,-8) -- (10,-8);
\draw[->][black] (0,-8) -- (0,2);



\filldraw[draw=blue, fill=white] (4,-1) circle (3.5pt) (8,-2) circle (3.5pt) (5,-3) circle (3.5pt) (1,-4) circle (3.5pt) (3,-5) circle (3.5pt) (6.5,-6) circle (3.5pt);

\draw[black](12,0.4)node[anchor=north]{$t=0$};
\draw[red] (4,-1)node[anchor=north]{$z_1$} (8,-2)node[anchor=north]{$z_2$} (5,-3)node[anchor=north]{$z_3$} (1,-4)node[anchor=north]{$z_4$} (3,-5)node[anchor=north]{$z_5$} (6.5,-6)node[anchor=north]{$z_6$} ;
\end{tikzpicture}

\captionsetup{justification=centering}
\caption{Example of a permutation initial tree.}
    \label{fig:permutationinitial}

\end{figure}

In the next section, we derive a relation between the execution of $\textsc{Greedy}$ on the flat initial tree and the permutation initial tree.

\section{$\textsc{Greedy}$ on Permutation Initial Tree}\label{sec:GRpermu}

In the previous section, we discussed how the cost of $\textsc{Greedy}$ is different for different initial trees on the same search sequence $S$. With this in mind, we ask the following question: \emph{Does there exist an initial tree for which the cost of $\textsc{Greedy}$ on a sequence \( S \) matches its cost on the flat initial tree?} We answer this question in the affirmative by showing that for any permutation search sequence $S$, there exists a permutation initial tree $I_p \neq I_f$ such that the cost of $\textsc{Greedy}$ to execute the sequence $S$ on $I_f$ and $I_p$ is the same. To this end, we show the following lemma.

\begin{lemma}\label{lem:notouchinitial}
    Let $S$ be a permutation search sequence, and let $I_p$ be the permutation initial tree such that $I_p=\mathcal{M}(S)$. Then $\textsc{Greedy}$ does not touch any key from the initial tree $I_p$ at time $i \in [n]$.
\end{lemma}
\begin{proof}
    Let $s_i \in S$ be a search key at time $i$ and let $I_p$ be the initial tree corresponding to the search sequence $S$, such that $I_p = \mathcal{M}(S)$. We want to show that no keys from the initial tree $I_p$ are touched by $\textsc{Greedy}$ at time $i$. As $I_p=\mathcal{M}(S)$, we know that there is a point $z_i$ below $s_i$ at time $-i$. Any point below $z_i$ in the initial tree $I_p$ forms an arborally satisfied rectangle with $s_i$ at time $i$ because all these rectangles are satisfied due to $z_i$ (See  \cref{fig:pointsatisfied}). Therefore, any point below $z_i$ from the initial tree $I_p$ cannot be touched by $\textsc{Greedy}$ at time $i$.

As $I_p=\mathcal{M}(S)$, we know that any point in the initial tree $I_p$, above $z_i$, has already been searched before time $i$. Therefore, $s_i$ at time $i$ can only form an arborally unsatisfied rectangle with search keys $s_1$ to $s_{i-1}$. Therefore, any point above $z_i$ from the initial tree $I_p$ cannot be touched by $\textsc{Greedy}$ at time $i$. Hence, no points in the initial tree $I_p$ is touched by $\textsc{Greedy}$ at time $i$.
\end{proof}

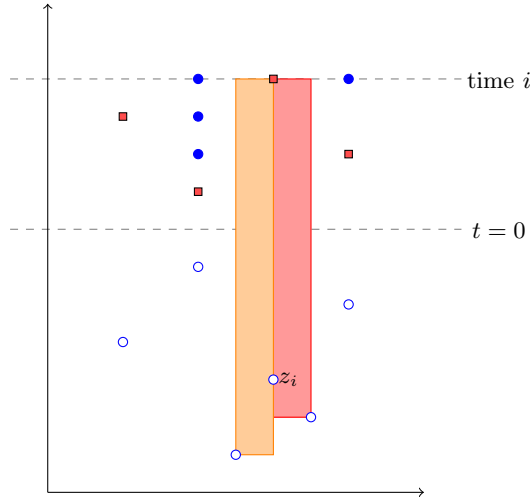
\begin{figure}[hpt!]
\centering
\begin{tikzpicture}[scale=.65]
\draw [gray,dashed] (-1,0)--(11,0) ;
\draw [gray,dashed] (-1,4)--(11,4) ;
\draw[->][black] (0,-7) -- (10,-7);
\draw[->][black] (0,-7) -- (0,6);

\filldraw[red!40!white,opacity=.5,draw=red] (6,4) rectangle (7,-5);
\filldraw[orange!40!white,opacity=.5,draw=orange] (6,4) rectangle (5,-6);

\filldraw[draw=black,fill=red!70!white] (3.9,0.9) rectangle (4.1,1.1) (7.9,1.9) rectangle (8.1,2.1) (1.9,2.9) rectangle (2.1,3.1) (5.9,3.9) rectangle (6.1,4.1) ;

\filldraw[blue] (4,2) circle (3.5pt) (4,3) circle (3.5pt) (4,4) circle (3.5pt) (8,4) circle (3.5pt)  ;

\filldraw[draw=blue, fill=white] (4,-1) circle (3.5pt) (8,-2) circle (3.5pt) (2,-3) circle (3.5pt) (6,-4) circle (3.5pt) (7,-5) circle (3.5pt) (5,-6) circle (3.5pt);

\draw[black](12,0.4)node[anchor=north]{$t=0$};
\draw[black](12,4.4)node[anchor=north]{time $i$};
\draw[black] (6.4,-3.6)node[anchor=north]{$z_i$} ;
\end{tikzpicture}

\caption{Execution of $\textsc{Greedy}$ on an initial tree $I_p=\mathcal{M}(S)$. The two shaded rectangles are arborally satisfied at time $i$ due to the point $z_i$ in the initial tree.}
    \label{fig:pointsatisfied}

\end{figure}

With the above lemma in hand, we now prove the following Theorem.

\begin{theorem}\label{lem:samecost}
    Let $S$ be a permutation search sequence, and let $I_p$ be a permutation initial tree such that $I_p=\mathcal{M}(S)$, then $\textsc{Greedy}_{I_p}(S)=\textsc{Greedy}_{I_f}(S)$.
\end{theorem}
\begin{proof}
Analyzing $\textsc{Greedy}$ on a flat initial tree (or no initial tree) is straightforward. When $\textsc{Greedy}$ performs a search $s_i \in S$ at time $i$, it can only touch the search keys up to time $i-1$. This means $\textsc{Greedy}$ at time $i$ does not touch keys that have not been searched yet. However, if the initial tree was not flat, $\textsc{Greedy}$ might have touched keys that are yet to be searched (See Figure 3(a)). 

 To prove the equivalence that $\textsc{Greedy}_{I_p}(S)=\textsc{Greedy}_{I_f}(S)$, we prove the following two conditions. 
 \begin{itemize}
     \item (a) $\textsc{Greedy}$ at time $i$ touches the same set of keys in both the initial tree i.e., $I_p$ and $I_f$.

     \item (b) $\textsc{Greedy}$ at time $i$ does not touch any key from the initial tree $I_p$.
 \end{itemize} 
 
 As we have already proved the latter condition in  \cref{lem:notouchinitial}, we now prove the first condition.

 For brevity, when we say a key (or a point) is touched in $I_p$ (or $I_f$), we mean a key is touched where the initial tree is $I_p$ (or $I_f$). We use induction for the proof.  When $s_1$ is searched at time $1$, in $I_p$ and $I_f$, $\textsc{Greedy}$ does not touch any point. Suppose by induction, the set of keys touched by $\textsc{Greedy}$ till time $i-1$ in both $I_p$ and $I_f$ are the same.

At time $i$, when $s_i$ is searched, $\textsc{Greedy}$ will find all the arborally unsatisfied rectangles in $I_p$ and $I_f$. The set of keys that can be touched at time $i$ are among $s_1$ to $s_{i-1}$ as we already know that no key from the initial tree is touched at time $i$. We now claim that the keys touched at time $i$ in $I_p$ and $I_f$ are the same. Suppose, for contradiction, that a key $q$ is touched in $I_p$ at time $i$ but not in $I_f$. This implies $\square_{s_iq}$ is not arborally satisfied in $I_p$ but $\square_{s_iq}$ is arborally satisfied in $I_f$. Therefore, there exists a point, say $q'$ in $I_f$ before time $i$ for which $\square_{s_iq}$ is satisfied. But by induction, this point is also present in $I_p$, which contradicts that $\square_{s_iq}$ is arborally unsatisfied in $I_p$. Similarly, we can show a contradiction when a key $q$ is touched in $I_f$ but not in $I_p$. Hence, at time $i$, $\textsc{Greedy}$ touches the same set of keys in both $I_p$ and $I_f$.

\end{proof}

 \cref{lem:samecost} shows that there is an initial tree other than the flat initial tree such that the cost of $\textsc{Greedy}$ on any search sequence $S$ is the same on both the initial trees.

    \begin{corollary}\label{cor:samecost}
    Let $I_p$ be a permutation initial tree, and let $S$ be a permutation sequence such that $S=\mathcal{M}(I_p)$, then $\textsc{Greedy}_{I_p}(S)=\textsc{Greedy}_{I_f}(S)$.
    \end{corollary}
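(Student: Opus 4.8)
The statement to prove is Corollary~\ref{cor:samecost}, which asserts that if $I_p$ is a permutation initial tree and $S = \mathcal{M}(I_p)$, then $\GR_{I_p}(S) = \GR_{I_f}(S)$. The plan is to derive this directly from Lemma~\ref{lem:samecost} by observing that the mirror operation $\mathcal{M}$ is, up to the sign convention distinguishing ``search time'' from ``initial-tree time'', an involution: applying $\mathcal{M}$ to a point set and then applying it again recovers the original configuration. So the content of the corollary is really the same as that of the lemma, just with the roles of ``the sequence'' and ``the initial tree'' swapped in the naming.

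Concretely, first I would unpack the hypothesis. Write $I_p = (z_1, z_2, \dots, z_n)$, so that $z_i$ sits at $x$-coordinate $z_i.x$ and time $-i$. By Definition~\ref{def:mirror}, the mirror $\mathcal{M}(I_p)$ reflects across the $x$-axis, so the point at time $-i$ becomes a point at time $i$; reinterpreting these positive times as a search sequence gives $S = \mathcal{M}(I_p)$ with $s_i = z_i.x$, i.e.\ the $i$-th search key of $S$ is exactly the key of $z_i$. Now apply $\mathcal{M}$ once more: $\mathcal{M}(S)$ reflects $S$ back across the $x$-axis, placing a point at $x$-coordinate $s_i$ and time $-i$, which is precisely $z_i$ again. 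Hence $I_p = \mathcal{M}(S)$. That is exactly the hypothesis of Lemma~\ref{lem:samecost} with this $S$ and this $I_p$, and the lemma yields $\GR_{I_p}(S) = \GR_{I_f}(S)$, which is the claim.

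The main point to be careful about — and the only place where the argument is more than a one-line invocation — is that the mirror operation as defined acts on the same underlying plane in both directions: $\mathcal{M}$ sends $(a, b)$ to $(a, -b)$, so it is genuinely an involution on point sets of the plane, and the only subtlety is the bookkeeping convention that a point with positive $y$-coordinate is read as ``key searched at time $y$'' while a point with negative $y$-coordinate is read as ``initial-tree point at time $-y$''. Under this convention $S = \mathcal{M}(I_p)$ and $I_p = \mathcal{M}(S)$ describe literally the same pair of point sets, so there is nothing to re-prove about $\GR$'s behaviour; all the work was already done in Lemma~\ref{lem:samecost} (the claim that no initial-tree key is ever touched, and the inductive argument that the touched keys agree with the flat-tree run). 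I expect no real obstacle here — the corollary is a restatement of the lemma under the symmetry $\mathcal{M}^2 = \mathrm{id}$, and the proof is a two or three sentence reduction once the involution is spelled out.

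\begin{proof}
Write $I_p = (z_1, z_2, \dots, z_n)$, where $z_i$ is the point of the initial tree at time $-i$, and let $k_i = z_i.x$ be its key. By Definition~\ref{def:mirror}, the mirror point set $\mathcal{M}(I_p)$ replaces each $(k_i, -i)$ by $(k_i, i)$; viewed as a search sequence this is $S = \mathcal{M}(I_p)$ with $s_i = k_i$. Applying the mirror operation once more to $S$ replaces each $(s_i, i) = (k_i, i)$ by $(k_i, -i) = z_i$, so $\mathcal{M}(S) = I_p$. Thus the pair $(S, I_p)$ satisfies $I_p = \mathcal{M}(S)$, which is exactly the hypothesis of Lemma~\ref{lem:samecost}. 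Applying that lemma gives $\GR_{I_p}(S) = \GR_{I_f}(S)$, as required.
\end{proof}
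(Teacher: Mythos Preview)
Your proposal is correct and matches the paper's intent: the paper states Corollary~\ref{cor:samecost} immediately after Lemma~\ref{lem:samecost} without a separate proof, treating it as a direct consequence. Your argument spells out exactly why---$\mathcal{M}$ is an involution on point sets, so $S=\mathcal{M}(I_p)$ gives $I_p=\mathcal{M}(S)$ and Lemma~\ref{lem:samecost} applies---which is the intended one-line reduction.
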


  \begin{corollary}\label{cor:samecost1}
    Let $S$ be a search sequence. Let the $OPT(S)=\textsc{Greedy}_{I_f}(S)$ then $\textsc{Greedy}_{I_p}(S)=OPT(S)$, where $I_p=\mathcal{M}(S)$.
\end{corollary}

A natural question arises regarding the application of \cref{lem:samecost}. In the following section, we demonstrate how \cref{lem:samecost} can be used to derive an \( O(n) \) bound for any preorder sequence \( S \) on a permutation initial tree \( I_p = \mathcal{M}(S) \). While we specifically illustrate this application for preorder sequences, the theorem can similarly be applied to any sequence \( S \) for which the bound $\textsc{Greedy}_{I_f}(S)$ is known.

\section{Preorder on Permutation Initial Tree}\label{sec:preorder}

As discussed earlier, the cost of the preorder sequence on $\textsc{Greedy}$ with a flat initial tree is known to be $O(n)$ \cite{chalermsook2015pattern}. For arbitrary initial tree, in \cite{chalermsook2015pattern}, the authors gave a bound of $n2^{\alpha(n)^{O(1)}}$ which was recently improved to $O(n2^{\alpha(n)})$.

In \cite{chalermsook2015pattern}, the authors showed that any preorder sequence can be executed in $O(n)$ time if preprocessing is permitted. In the preprocessing phase, they convert an arbitrary initial tree into a flat initial tree in $O(n)$ time. Subsequently, they show that the cost of performing preorder traversal on the resulting flat initial tree ($I_f$) is $O(n)$. For completeness, we include their corollary as follows.

\begin{corollary}(Corollary 1.5 of \cite{chalermsook2015pattern})\label{cor:preorderparinya} 

The cost of accessing a preorder sequence using $\textsc{Greedy}$, with preprocessing, is linear.
\end{corollary}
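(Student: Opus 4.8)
This statement is quoted verbatim as Corollary~1.5 of \cite{chalermsook2015pattern}, so the shortest route is simply to cite that paper; if I had to reconstruct it, my plan would be to split the run of $\GR$ into two phases, each bounded by $O(n)$.

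\emph{Phase 1 (preprocessing).} Starting from an arbitrary initial tree $I$, I would invoke the reduction from Section~2 of \cite{chalermsook2015pattern}: by spending $O(n)$ additional accesses one ``flattens'' $I$, reaching a state from which $\GR$ behaves exactly as if it had started from the flat initial tree $I_f$. Geometrically this amounts to placing $O(n)$ touched points in a thin horizontal band directly above time $0$, shielding every point of $I$ from above; after this band, no point of $I$ can be an endpoint of an arborally unsatisfied rectangle, so the remainder of the execution is identical to the flat-tree execution. This phase costs $O(n)$.

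\emph{Phase 2 (preorder access from a flat tree).} Now the target is $\GR_{I_f}(S)=O(n)$ for a preorder $S$, i.e.\ an $S$ that avoids the pattern $(2,3,1)$. I would follow \cite{chalermsook2015pattern} and recurse on the key interval, splitting at the root of the BST that generates $S$: the preorder property forces the accesses to the left subinterval into one contiguous time block and those to the right subinterval into another, so one recurses on each block and charges the points $\GR$ places on the vertical cut between the two subintervals to the boundary. Pattern avoidance is what makes the charging work --- it caps at $O(1)$ amortized the number of touches crossing a fixed cut per access --- and a recursion of the shape $T(n)\le T(a)+T(n-1-a)+O(\text{boundary terms})$ then sums to $T(n)=O(n)$.

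Chaining the two phases gives total cost $O(n)+O(n)=O(n)$, which is the claim. The main obstacle will be Phase~2: Phase~1 is essentially sweepline bookkeeping, whereas the linear bound for $\GR$ on a preorder access sequence over a flat tree is the genuinely delicate combinatorial theorem of \cite{chalermsook2015pattern}, resting on its pattern-avoidance/charging analysis rather than on anything developed in the present paper. I note that this corollary enters here only as a black box: combined with Lemma~\ref{lem:samecost} it will immediately give $\GR_{I_p}(S)=O(n)$ for the permutation initial tree $I_p=\mathcal{M}(S)$, which is the paper's main theorem.
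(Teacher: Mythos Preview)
Your proposal is correct and matches the paper exactly: the paper gives no proof of this corollary at all, it merely imports it verbatim from \cite{chalermsook2015pattern} as a black box (``For completeness, we include their corollary as follows''), which is precisely what you identify as the shortest route. Your two-phase reconstruction sketch is accurate extra content --- the preprocessing-to-flat step plus the linear bound for preorder on $I_f$ is indeed how \cite{chalermsook2015pattern} obtains the result --- but none of it appears in the present paper, which uses the corollary only in combination with Lemma~\ref{lem:samecost} to obtain Theorem~\ref{thm:preorder}, exactly as you note in your final paragraph.
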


We get the following theorem for the preorder sequence using  \cref{lem:samecost} and  \cref{cor:preorderparinya}.

\begin{theorem}\label{thm:preorder}
  Let $S$ be a preorder sequence, and let $I_p=\mathcal{M}(S)$ then $\textsc{Greedy}_{I_p}(S)=O(n)$.
\end{theorem}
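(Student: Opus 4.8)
The plan is to combine the two ingredients that the excerpt has already set up: Lemma~\ref{lem:samecost}, which says that running \GR\ on a permutation sequence $S$ with initial tree $I_p=\mathcal{M}(S)$ costs exactly the same as running \GR\ on $S$ with the flat initial tree $I_f$; and Corollary~\ref{cor:preorderparinya}, which says that on the flat initial tree (equivalently, after the $O(n)$ preprocessing step of \cite{chalermsook2015pattern} that flattens an arbitrary initial tree) \GR\ accesses any preorder sequence in linear time.

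First I would note that a preorder sequence is in particular a permutation of $[n]$ (it avoids the pattern $(2,3,1)$ and, being a preorder traversal of a BST on $[n]$, visits each key exactly once), so $\mathcal{M}(S)$ is a well-defined permutation initial tree in the sense of Definition~\ref{def:initialtree}, and Lemma~\ref{lem:samecost} applies with this $S$. Hence $\GR_{I_p}(S)=\GR_{I_f}(S)$. Next, I would invoke Corollary~\ref{cor:preorderparinya}: the cost of accessing a preorder sequence using \GR\ with preprocessing is $O(n)$; since preprocessing just converts the (here irrelevant) starting configuration into the flat tree $I_f$ in $O(n)$ time, this is exactly the statement that $\GR_{I_f}(S)=O(n)$ for every preorder sequence $S$. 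Chaining the two equalities/bounds gives $\GR_{I_p}(S)=\GR_{I_f}(S)=O(n)$, which is the theorem.

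The only real content to spell out is why Corollary~\ref{cor:preorderparinya} yields the clean statement $\GR_{I_f}(S)=O(n)$, i.e.\ that the ``preprocessing'' in \cite{chalermsook2015pattern} is precisely the flattening operation and that its $O(n)$ cost is absorbed into the final bound; this is already implicit in the discussion preceding the theorem (``preorder is $O(n)$ on the flat initial tree for $\GR$''), so it is a remark rather than an obstacle. I do not anticipate a genuine difficulty here — the theorem is essentially a two-line corollary of the lemma and the cited result — so the ``hard part,'' such as it is, was really Lemma~\ref{lem:samecost} itself, and the proof of the theorem is just the composition $\GR_{I_p}(S)=\GR_{I_f}(S)=O(n)$, together with the observation that $I_p=\mathcal{M}(S)\neq I_f$ so that this exhibits a genuinely non-flat initial tree realizing the linear bound.
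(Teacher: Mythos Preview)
Your proposal is correct and is exactly the paper's approach: the theorem is obtained by chaining Lemma~\ref{lem:samecost} (so $\GR_{I_p}(S)=\GR_{I_f}(S)$) with Corollary~\ref{cor:preorderparinya} (so $\GR_{I_f}(S)=O(n)$). The paper in fact gives no proof beyond the sentence ``Using Lemma~\ref{lem:samecost} and Corollary~\ref{cor:preorderparinya}, we get the following theorem,'' and your additional remarks (that a preorder sequence is a permutation, and that ``preprocessing'' in \cite{chalermsook2015pattern} means flattening to $I_f$) are correct elaborations of the same two-line derivation.
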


This is the first result showing that $\textsc{Greedy}$ on preorder sequence is $O(n)$ when the initial tree is not a flat tree ($I_f$), matching the similar result obtained for Splay trees by  \cite{chaudhuri1993splaying}.  We now define a permutation initial tree, named a preorder initial tree, as follows.

\begin{definition}(Preorder Initial tree)

     A permutation initial tree that avoids $(2,3,1)$ as a subsequence is called a preorder initial tree. 
\end{definition}
Similar to Corollary \ref{cor:samecost}, we get the following corollary.

\begin{corollary}\label{cor:preorder}
    For every preorder initial tree $I_{pre}$, there is a preorder sequence $S=\mathcal{M}(I_{pre})$ such that $\textsc{Greedy}_{I_{pre}}(S)=O(n)$.
\end{corollary}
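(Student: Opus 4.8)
The plan is to obtain Corollary \ref{cor:preorder} as an essentially immediate consequence of Theorem \ref{thm:preorder}, using that the mirror operation $\mathcal{M}$ is an involution that swaps the roles of ``initial tree'' and ``search sequence''. The two things I would need to verify are: (i) that the mirror of a preorder \emph{initial tree} is a genuine preorder \emph{search sequence}, and (ii) that feeding this sequence into Theorem \ref{thm:preorder} reproduces exactly the initial tree $I_{pre}$ we started from.

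First I would fix a preorder initial tree $I_{pre}=(z_1,\dots,z_n)$, where $z_i$ is its point at time $-i$; by Definition \ref{def:initialtree} and the definition of $I_{pre}$ this is a permutation on $[n]$ whose key-sequence $(z_1.x,\dots,z_n.x)$ avoids $(2,3,1)$. Set $S:=\mathcal{M}(I_{pre})$. By Definition \ref{def:mirror}, $\mathcal{M}$ only negates the $y$-coordinate of every point, so $S$ has keys $\{1,\dots,n\}$ each used once and times $1,\dots,n$ each used once, i.e. $S$ is a permutation search sequence. Moreover the point $z_i$, which is the $i$-th point of $I_{pre}$ when the initial tree is read by increasing distance from time $0$ (times $-1,-2,\dots$), is mapped to the point of $S$ at time $+i$, which is the $i$-th point of $S$ read by increasing time. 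Hence $S$, read in time order, is exactly $(z_1.x,\dots,z_n.x)$, so $S$ avoids $(2,3,1)$ and is a preorder sequence.

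Next I would use that $\mathcal{M}$ is an involution (negating $y$ twice is the identity), together with the sign convention that initial trees live at negative times and search sequences at positive times: since $S=\mathcal{M}(I_{pre})$ has positive times, $\mathcal{M}(S)$ has negative times and equals $I_{pre}$. So $I_{pre}=\mathcal{M}(S)$ with $S$ a preorder sequence, and Theorem \ref{thm:preorder} applied to $S$ gives $\GR_{I_{pre}}(S)=\GR_{\mathcal{M}(S)}(S)=O(n)$, which is the claim. (Equivalently one can route it through Corollary \ref{cor:samecost}, which gives $\GR_{I_{pre}}(S)=\GR_{I_f}(S)$ since $S=\mathcal{M}(I_{pre})$, and Corollary \ref{cor:preorderparinya}, which gives $\GR_{I_f}(S)=O(n)$ since $S$ is a preorder sequence.)

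I do not expect a genuine obstacle here: the corollary is really a restatement of Theorem \ref{thm:preorder}, and all the substance is in Lemma \ref{lem:samecost}/Corollary \ref{cor:samecost} and in the cited linear bound for preorder on a flat tree. The only point demanding care is the bookkeeping about the sign of time --- one must check that ``$I_{pre}$ avoids $(2,3,1)$ as a subsequence'' (an ordering statement about the initial tree, ordered by increasing distance from time $0$) is equivalent to the standard pattern-avoidance statement for the search sequence $\mathcal{M}(I_{pre})$ (ordered by increasing time from $0$), which holds precisely because $\mathcal{M}$ reflects each point's time across $0$ and therefore preserves this ordering of the points.
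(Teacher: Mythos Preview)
Your proposal is correct and matches the paper's intent: the paper states Corollary~\ref{cor:preorder} as an immediate consequence of Theorem~\ref{thm:preorder} without giving a separate proof, and your argument---that $\mathcal{M}$ is an involution preserving the $(2,3,1)$-avoidance pattern so that $S=\mathcal{M}(I_{pre})$ is a preorder sequence with $\mathcal{M}(S)=I_{pre}$---is precisely the unpacking of ``immediate'' that the paper leaves implicit. Your careful check of the time-ordering bookkeeping (that $I_{pre}$ read from $t=-1$ downward and $S$ read from $t=1$ upward yield the same key sequence) is the only nontrivial point, and you handle it correctly.
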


   \section{Conclusion and Open Problems}\label{sec:conclusion}
In this paper, we analyzed the cost of $\textsc{Greedy}$ on the permutation initial tree, demonstrating that it matches the cost of $\textsc{Greedy}$ on the flat initial tree for all search sequences \( S \).  Any tight bound for $\textsc{Greedy}$ on the flat initial tree directly implies the same bound on the permutation initial tree. Consequently, an interesting direction for future work is identifying more sequences for which $\textsc{Greedy}$ is optimal when the initial tree is flat. Another promising direction is to explore additional classes of initial trees where the cost of $\textsc{Greedy}$ matches its cost on the flat initial tree.

For the preorder sequences, we showed that $\textsc{Greedy}$ achieves the optimal bound of \( O(n) \) on permutation initial trees.
 An interesting direction would be identifying more initial trees for which $\textsc{Greedy}$ is optimal for preorder sequences. This could provide valuable insights into the preorder conjecture, which is defined for arbitrary initial trees.

\bibliographystyle{alpha}
\newcommand{\etalchar}[1]{$^{#1}$}

\end{document}